\begin{document}
\title{Market Model for Demand Response under Block Rate Pricing}

%
\author{Haris Mansoor}
\affiliation{%
	\institution{Department of Computer Science, \\
		Lahore University of Management Sciences}
	\city{Lahore}
	\state{Pakistan}
}
\email{16060061@lums.edu.pk}

\author{Naveed Arshad}
\affiliation{%
 \institution{Department of Computer Science, \\
 	Lahore University of Management Sciences}
 \city{Lahore}
 \state{Pakistan}
}
\email{naveedarshad@lums.edu.pk}


\renewcommand{\shortauthors}{A et al.}

\begin{abstract}



Renewable sources are taking center stage in electricity generation. However, matching supply with demand in a renewable-rich system is a difficult task due to the intermittent nature of renewable resources (wind, solar, etc.). As a result, Demand Response (DR) programs are an essential part of the modern grid. An efficient DR technique is to devise different pricing schemes that encourage customers to reduce or shift the electric load.

In this paper, we consider a market model for DR using Block Rate Pricing (BRP) of two blocks. We use a utility maximization approach in a competitive market. We show that when customers are price taking and the utility cost function is quadratic the resulting system achieves an equilibrium. Moreover, the equilibrium is unique and efficient, which maximizes social welfare. A distributed algorithm is proposed to find the optimal pricing of both blocks and the load. Both the customers and the utility runs the market. The proposed scheme encourages customers to curtail or shift their load. Numerical results are presented to validate our technique.

\end{abstract}

%
%



\keywords{Smart grid, Demand Response, Real time pricing, MArket Model, Block Rate Pricing, Utility maximization, Social Welfare, Distribute Optimization}

\maketitle

\section{Introduction}

Many countries have plans to shift to renewable sources of electricity by 2050 \cite{inproceedings1,article}. This is mainly due the environmental and cost related problems with the fossil fuel based power plants. While renewable sources (such as solar and wind) provide an environment friendly solution to the energy demand, they also create challenges in the electricity distribution system. The problems mostly originate from the variable and stochastic nature of renewable sources which produces a mismatch between the demand and supply of electricity \cite{lusis2017short}.

With the current penetration level of renewable resources, utilities use different approaches to handle the demand and supply gap. Most commonly used methods include fossil fuel based peaker power plants
\cite{vardakas2015survey} and  
 Demand side management (DSM) techniques such as  
Demand Response (DR), spinning reserve, electricity storage  \cite{roberts2011role,mohd2008challenges} and 
 electricity curtailment \cite{aalami2010demand}.

Electricity storage methods and peaker power plants are very costly to operate, while electricity curtailment has negative impact on the life style of customers. DR programs are the most effective way to match demand with supply. Based on the techniques used to encourage customers to reduce or shift their load, DR programs are broadly divided into two categories \cite{vardakas2015survey}. $(1)$ Price based programs use different pricing schemes $(2)$ Incentive based programs use financial or other benefit to enroll customers in the DR programs.

The price based DR programs consist of
\begin{enumerate}
	\item{{\em Flat Pricing}: Fixed price per unit of electricity  }
	\item{{\em Block Rate Pricing (BRP)}: Flat price but changes under different usage (blocks) of electricity}
	\item{{\em Time of Use Pricing (TOU)}: Flat price during different time periods of the day }
	\item{{\em Critical Peak Pricing (CPP)}: Price is not flat and can changes under different system conditions}
		\item{{\em Real Time Pricing (RTP)}: Price changes in each time intervals}

\end{enumerate}

In this paper, we use BRP as a DR program to motivate customers to reduce or shift their load. In BRP there is a threshold of electricity, the price per unit increases or decreases as electricity consumption increases beyond the threshold (blocks). This can discourage or encourage customers to use less or more electricity \cite{cardenas2019consequences}. We consider a situation in which the power supply is flexible and there are two blocks of BRP. The consumers maximize their net utility over the whole day. It is proved that under BRP and quadratic cost function the system achieve equilibrium that maximizes social welfare. A distributed algorithm is proposed to find the equilibrium. The implementation of proposed distributed algorithm requires two way communication between the utility company and the customers. Such kind of communication is possible with Advanced Metering Infrastructure (AMI).

The paper is organized as follow. Literature review is presented in section \ref{relatedwork}. In section \ref{Proposed_methodology} utility maximization problem is formulated under BRP. Distributed algorithm and numerical results are presented in section \ref{Experiments}. Conclusions  are drawn in  section \ref{conclusions}.

\section{Related Work}\label{relatedwork}


A large number of related literature exists on DR and different market models for the power system. In this section, we discuss the work that is related to our approach.

In \cite{chen2010two} two market models for demand matching and demand shaping are presented. The equilibrium of these models is studied under competitive and oligopolistic situations and distributed algorithms are presented for their solution. Similarly, in \cite{vidyamani2019demand} a demand response solution based on utility maximization using TOU pricing is derived. The effect of TOU pricing in power system with plug-in hybrid electric vehicle is studied in \cite{shao2010impact} and it is concluded that TOU price is an effective method to reduce the peak demand. An extensive study of different pricing methods on residential users is presented in \cite{newsham2010effect}. The data suggests the CPP along with automatic curtailment is a good DR technique. However, it can cause serious hardships for consumers. A real time pricing model under many distributed generation units is proposed in \cite{huang2013dynamic}. A distributed algorithm is also presented that can solve the problem. The effect of BRP on residential customers in chine is studied in \cite{lin2012designation} and some interesting findings are reported. The BRP will substantially improve the equity and efficiency of residential customers. Secondly the BRP the electricity consumption of residential customers will be reduced, which results in the reduction of carbon emission.

\section{Problem Formulation} \label{Proposed_methodology}

Consider a power system with a set $ N $ of users/customers and a set $ T $ of time slots in a day. These customers are served by one power company. For each customer $ i \in N $ and $t \in T $ there is a power consumption denoted by $ x_{i}^{t} $. Moreover, each customer has a minimum and maximum power consumption in a day. The minimum power consumption corresponds to basic requirements over the whole day and maximum power consumption corresponds to all equipment running the whole day. 
\begin{equation}\label{eq1}
	\sum_{t=1}^{T} x_{i}^{t} \geq D_{i}^{min},\ \  i \in N 
\end{equation}
\begin{equation}\label{eq2}
	\sum_{t=1}^{T} x_{i}^{t} \leq D_{i}^{max},\ \  i \in N 
\end{equation}

We consider BRP with two blocks, the pricing of first and the second block is represented by $P_{l}^{t}$ and $P_{l}^{t}$ respectively. 
For every time $t$ there is a predefined threshold $b^t$. When electricity consumption is increased beyond the threshold $ b^t $ the electricity price increases or decreases step wise.
\begin{equation} \label{eq3}
	P^{t} =
	\begin{cases}
		P_{l}^{t}       & \quad \text{if } 0\leq x_{i}^{t} \leq b^t \\
		P_{u}^{t}   & \quad \text{if } b^t\le x_{i}^{t}
	\end{cases}
\end{equation}
After consuming $ x_{i}^{t} $ power each user obtains a utility $ U( x_{i}^{t}) $, which represents the level of satisfaction of a user as a function of power. The utility function should be non-decreasing and concave such that zero utility corresponds to zeros power consumption \cite{chen2010two}. We can write $x_{i}^{t}$ into two parts such that one is equal to and less than $b^t$ and the other is greater than or equal to $b^t$.
\begin{equation}\label{eq4}
	x_{i}^{t}=\text{min}(x_{i}^{t},b^t)+\text{max}(x_{i}^{t},b^t)-b^t
\end{equation}
The power company incurs a cost $ C(D,t) $ for providing $ D $ demand at time $ t $. The cost function is ideal to model time dependent power production such as renewable energy resources. We assume that the cost function is quadratic in $D$.

The goal of power company is to maximize its net revenue. Which is profit minus cost.The profit and cost for a single time slot $t$ are given below.
\begin{equation}\label{eq5}
	Profit= P_{l}^{t}*\sum_{i \in N}\text{min}(x_{i}^{t},b^t)+
	P_{u}^{t}*\sum_{i \in N}(\text{max}(x_{i}^{t},b^t)-b^t)
\end{equation}
\begin{equation}\label{eq6}
	Cost=C \big(\sum_{i \in N}\text{min}(x_{i}^{t},b^t)+\sum_{i \in N}(\text{max}(x_{i}^{t},b^t)-b^t),t \big)
\end{equation}
We can replace min and max functions with dummy variables by adding additional constraints \cite{boyd2004convex}.
\begin{equation}\label{eq7}
	\begin{aligned}
		& \text{min}(x_{i}^{t},b^t)=y_{i}^{t} \\
		& \text{subject to:}\ \ y_{i}^{t} \leq x_{i}^{t};\ y_{i}^{t} \leq b^t
	\end{aligned}    
\end{equation}

\begin{equation}\label{eq8}
	\begin{aligned}
		& \text{max}(x_{i}^{t},b^t)=z_{i}^{t} \\
		& \text{subject to:}\ \ z_{i}^{t} \geq x_{i}^{t};\ z_{i}^{t} \geq b^t
	\end{aligned}    
\end{equation}
By using the above variables we can write $x_{i}^{t}$ as
\begin{equation}\label{eq9}
	x_{i}^{t}=y_{i}^{t}+z_{i}^{t}-b^t;\ \ \ \ i \in N, t \in T
\end{equation}
Using $y_{i}^{t}$ and $z_{i}^{t}$ the above net revenue problem becomes.
\begin{equation}\label{eq10}
	\begin{aligned}
		& \underset{\substack{y_{i}^{t} \leq x_{i}^{t} \\ y_{i}^{t} \leq b^t \\ z_{i}^{t} \geq x_{i}^{t} \\ z_{i}^{t} \geq b^t \\ x_{i}^{t} \geq 0 \\ x_{i}^{t}=y_{i}^{t}+z_{i}^{t}-b^t } }{\text{Maximize:}}
		\sum_{t \in T} \bigg( P_{l}^{t}\sum_{i \in N}y_{i}^{t}+ P_{u}^{t}\sum_{i \in N}(z_{i}^{t}-b^t)-C(\sum_{i \in N}(y_{i}^{t}+z_{i}^{t}-b^t),t) \bigg) \\
	\end{aligned}
\end{equation}
By taking partial derivatives the solution of above problem becomes.
\begin{equation}\label{eq11}
	P_{l}^{t}=\frac{\partial C(\sum_{i \in N}(y_{i}^{t}+z_{i}^{t}-b^t),t)}{\partial (\sum_{i \in N} y_{i}^{t})}; \ \ \   t \in T
\end{equation}
\begin{equation}\label{eq12}
	P_{u}^{t}=\frac{\partial C(\sum_{i \in N}(y_{i}^{t}+z_{i}^{t}-b^t),t)}{\partial (\sum_{i \in N} z_{i}^{t})}; \ \ \   t \in T
\end{equation}
\subsection{Utility Maximization}
In this section we consider a competitive market where customers take the price communicated by power company. For real time block rate pricing $P_{l}^{t}$ and $P_{u}^{t}$ the customer $i$ allocates its energy usage to maximize its aggregated net utility,along with constraints \eqref{eq1} and \eqref{eq2}.
\begin{equation}\label{eq13}
	\begin{aligned}
		& \underset{\substack{ x_{i}^{t} \geq 0 } }{\text{Maximize:}}
		& & \sum_{t \in T} \bigg( U(x_{i}^{t})-P_{l}^{t}\text{min}(x_{i}^{t},b^t)+P_{u}^{t}(\text{max}(x_{i}^{t},b^t)-b^t) \bigg) \\
		& \text{Subject to:}
		& & \sum_{t=1}^{T} x_{i}^{t} \geq D_{i}^{min}
		\\
		& & &\sum_{t=1}^{T} x_{i}^{t} \leq D_{i}^{max}
	\end{aligned}
\end{equation}
Using the \eqref{eq7}, \eqref{eq8} and \eqref{eq9} the above problem can be transformed into 
\begin{equation}\label{eq14}
	\begin{aligned}
		& \underset{\substack{y_{i}^{t} \leq x_{i}^{t} \\ y_{i}^{t} \leq b^t \\ z_{i}^{t} \geq x_{i}^{t} \\ z_{i}^{t} \geq b^t \\ x_{i}^{t} \geq 0 \\ x_{i}^{t}=y_{i}^{t}+z_{i}^{t}-b^t } }{\text{Maximize:}}
		& & \sum_{t \in T} \bigg( U(y_{i}^{t}+z_{i}^{t}-b^t)-P_{l}^{t}y_{i}^{t}- P_{u}^{t}(z_{i}^{t}-b^t) \bigg) \\
		& \text{Subject to:}
		& & \sum_{t=1}^{T} (y_{i}^{t}+z_{i}^{t}-b^t) \geq D_{i}^{min}
		\\
		& & &\sum_{t=1}^{T} (y_{i}^{t}+z_{i}^{t}-b^t) \leq D_{i}^{max}
	\end{aligned}
\end{equation}
The above model captures both the real time block rate pricing and demand shifting. The sum over the $t$ enforce the demand shifting.
The solution of above problem is the equilibrium of the competitive market problem \eqref{eq14}. We can find the equilibrium using the Lagrange dual function by using Lagrange multipliers $\lambda_{i}^1 \geq 0$ and $\lambda_{i}^2 \geq 0$ for constraints in \eqref{eq14} . The optimal $y_{i}^{t}$ and $z_{i}^{t}$ are given below by following KKT conditions \cite{boyd2004convex}.
\begin{equation}\label{eq15}
	\begin{aligned}
		& \frac{\partial U(y_{i}^{t}+z_{i}^{t}-b^t,t)}{\partial y_{i}^{t}}-P_{l}^{t}-\lambda_{i}^1+\lambda_{i}^2=0; \ \ \  i \in N, t \in T \\
		& \frac{\partial U(y_{i}^{t}+z_{i}^{t}-b^t,t)}{\partial z_{i}^{t}}-P_{u}^{t}-\lambda_{i}^1+\lambda_{i}^2=0; \ \ \  i \in N, t \in T \\
		& \lambda_{i}^1(\sum_{t=1}^{T} (y_{i}^{t}+z_{i}^{t}-b^t) - D_{i}^{max}=0); \ \ \  i \in N \\
		& \lambda_{i}^2(D_{i}^{min}-\sum_{t=1}^{T} (y_{i}^{t}+z_{i}^{t}-b^t)=0); \ \ \ i \in N
	\end{aligned}  
\end{equation}
A competitive equilibrium for demand response system is a system of equations that simultaneously solves \eqref{eq11}, \eqref{eq12} and \eqref{eq15}. The equilibrium consists of variables $\{y_{i}^{t},z_{i}^{t}, P_{l}^{t},P_{u}^{t}\}$ that solves equations \eqref{eq11}, \eqref{eq12} and \eqref{eq15} simultaneously.
\begin{theorem}
	There is a unique equilibrium for a competitive demand response system under block rate pricing. Moreover, the equilibrium maximizes the social welfare.
	\begin{equation}\label{eq16}
		\begin{aligned}
			& \underset{\substack{y_{i}^{t} \leq x_{i}^{t} \\ y_{i}^{t} \leq b^t \\ z_{i}^{t} \geq x_{i}^{t} \\ z_{i}^{t} \geq b^t \\ x_{i}^{t} \geq 0 \\ x_{i}^{t}=y_{i}^{t}+z_{i}^{t}-b^t } }{\text{Maximize:}}
			& & \sum_{t \in T} \bigg( \sum_{i \in N}U(y_{i}^{t}+z_{i}^{t}-b^t)-C(\sum_{i \in N}(y_{i}^{t}+z_{i}^{t}-b^t),t) \bigg) \\
			& \text{Subject to:}
			& & \sum_{t=1}^{T} (y_{i}^{t}+z_{i}^{t}-b^t) \geq D_{i}^{min},\ \  i \in N \\
			& & &\sum_{t=1}^{T} (y_{i}^{t}+z_{i}^{t}-b^t) \leq D_{i}^{max},\ \  i \in N 
		\end{aligned}
	\end{equation}
\end{theorem}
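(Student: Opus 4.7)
The plan is to prove the theorem along the standard convex-duality route for competitive market equilibria: (i) observe that \eqref{eq16} is a concave maximization over a convex set so that KKT conditions are necessary and sufficient for optimality, (ii) write those KKT conditions, and (iii) identify the partial derivatives of the cost term with the market prices $P_l^t,P_u^t$, so that the KKT system of \eqref{eq16} is literally the same as the equilibrium system \eqref{eq11}--\eqref{eq12} together with \eqref{eq15}. Convexity is immediate: all constraints in \eqref{eq16} are linear in $(x,y,z)$, $\sum_i U(y_i^t+z_i^t-b^t)$ is concave since $U$ is concave with a linear argument, and $-C(\cdot,t)$ is concave since $C$ is assumed convex (indeed quadratic) in total demand. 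Slater's condition is clearly satisfiable, so KKT is necessary and sufficient.

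For the second step I would attach nonnegative multipliers $\lambda_i^1,\lambda_i^2$ to the $D_i^{\max}$ and $D_i^{\min}$ constraints of \eqref{eq16}, carry the remaining box constraints as side conditions, and compute stationarity in $y_i^t$:
\begin{equation*}
\frac{\partial U(y_i^t+z_i^t-b^t)}{\partial y_i^t}-\frac{\partial C\!\left(\sum_{j\in N}(y_j^t+z_j^t-b^t),t\right)}{\partial\bigl(\sum_j y_j^t\bigr)}-\lambda_i^1+\lambda_i^2=0,
\end{equation*}
and analogously in $z_i^t$. The cost-term factor is precisely the expression that \eqref{eq11} defines as $P_l^t$; on the $z$-side it is $P_u^t$ from \eqref{eq12}. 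So with these identifications the two stationarity equations are exactly the first two lines of \eqref{eq15}, and complementary slackness on the two aggregate demand bounds gives the remaining pair. Conversely, any competitive equilibrium tuple $\{y,z,P_l,P_u,\lambda^1,\lambda^2\}$ trivially satisfies the same KKT system and, by sufficiency, solves \eqref{eq16}. Hence competitive equilibrium and social welfare optimum coincide.

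Uniqueness follows from strict convexity of the quadratic cost: $C(\cdot,t)$ being strictly convex in total demand makes the objective of \eqref{eq16} strictly concave in the per-slot aggregate $\sum_i(y_i^t+z_i^t-b^t)$, pinning that aggregate (and therefore $P_l^t,P_u^t$) uniquely; under the usual strict concavity of $U$, the per-user $x_i^t=y_i^t+z_i^t-b^t$ are then uniquely determined via the user stationarity conditions. The main obstacle I expect is the final step of promoting uniqueness of $x_i^t$ to uniqueness of the split $(y_i^t,z_i^t)$: the feasible set of \eqref{eq16} tolerates spurious decompositions of a given $x_i^t$, and ruling them out requires showing that at any optimum $y_i^t=\min(x_i^t,b^t)$ and $z_i^t=\max(x_i^t,b^t)$. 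I would handle this by contradiction: if both $y_i^t<b^t$ and $z_i^t>b^t$ held strictly, subtracting the two stationarity equations would force $P_l^t=P_u^t$, i.e.\ a flat tariff at time $t$, contradicting the two-block structure in the non-degenerate regime; otherwise one of the bounds binds and the canonical min/max identification is recovered, yielding uniqueness of the full equilibrium tuple.
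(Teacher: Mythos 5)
Your proposal follows essentially the same route as the paper's proof: you identify the KKT conditions of the social welfare problem \eqref{eq16} with the competitive equilibrium system \eqref{eq11}, \eqref{eq12} and \eqref{eq15} (the paper does this by substituting the price expressions into \eqref{eq15} to obtain \eqref{eq17}) and you invoke strict convexity of the quadratic cost for uniqueness, exactly as the paper does. If anything you are more careful than the paper, which simply asserts that \eqref{eq16} is strictly convex without noting that the objective depends on $(y_i^t,z_i^t)$ only through $x_i^t=y_i^t+z_i^t-b^t$, so the uniqueness-of-the-split issue you flag and resolve is glossed over in the paper's own argument.
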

\begin{proof}
	We have assumed that our cost function is quadratic, under this assumption $ \frac{\partial C(\sum_{i \in N}(y_{i}^{t}+z_{i}^{t}-b^t))}{\partial (\sum_{i \in N} y_{i}^{t})} = \frac{\partial C(\sum_{i \in N}(y_{i}^{t}+z_{i}^{t}-b^t))}{\partial y_{i}^{t}}$. Using this fact if we put \ref{eq11} and \ref{eq12} into \ref{eq15}, we get the equilibrium condition for the social welfare problem \ref{eq16}. Moreover the equilibrium is unique since the problem \ref{eq16} is  strictly convex.
	\begin{equation}\label{eq17}
		\begin{aligned}
			& \frac{\partial U(y_{i}^{t}+z_{i}^{t}-b^t)}{\partial y_{i}^{t}}-\frac{\partial C(\sum_{i \in N}(y_{i}^{t}+z_{i}^{t}-b^t))}{\partial y_{i}^{t}}-\lambda_{i}^1+\lambda_{i}^2=0; \ \ \  \\
			& \frac{\partial U(y_{i}^{t}+z_{i}^{t}-b^t)}{\partial z_{i}^{t}}-\frac{\partial C(\sum_{i \in N}(y_{i}^{t}+z_{i}^{t}-b^t))}{\partial z_{i}^{t}}-\lambda_{i}^1+\lambda_{i}^2=0; \ \ \  \\
			& \lambda_{i}^1(\sum_{t=1}^{T} (y_{i}^{t}+z_{i}^{t}-b^t) - D_{i}^{max}=0); \ \ \ i \in N\\
			& \lambda_{i}^2(D_{i}^{min}-\sum_{t=1}^{T} (y_{i}^{t}+z_{i}^{t}-b^t)=0); \ \ \ i \in N
		\end{aligned}  
	\end{equation}
\end{proof}




%







\section{Distributed Algorithm and Numerical Results}\label{Experiments}
In this section we first propose a distributed algorithm to find the equilibrium and then we present the results that show the convergence of algorithm to optimal pricing and customers demand.
\subsection{Distributed Algorithm}
The social welfare optimization problem \ref{eq16} can be solved in a distributed and iterative way using gradient based algorithm \cite{bertsekas1989parallel}. We assume that the bracket interval $b^t$ are already defined and communicated to the users. 
The whole procedure repeats for every time step.

At the start of $k^{th}$ iteration

\begin{itemize}

	\item {The utility company communicates to each user and receives the demand data  $(x_{i}^{t+1})^k$ for the next time slot $t+1$. Using $b^{t+1}$ and $(x_{i}^{t+1})^k$, the utility company computes $(y_{i}^{t+1})^k,(z_{i}^{t+1})^k$ }
	\item{Using $(y_{i}^{t+1})^k,(z_{i}^{t+1})^k$ the utility company computes prices for both brackets and communicate them to users
	\begin{equation}\label{eq18}
	(P_{l}^{t+1})^{k}=\frac{\partial C(\sum_{i \in N}((y_{i}^{t+1})^{k}+(z_{i}^{t+1})^{k}-b^{t+1}),t)}{\partial (\sum_{i \in N} (y_{i}^{t+1})^{k})}
\end{equation}
\begin{equation}\label{eq19}
	(P_{u}^{t+1})^{k}=\frac{\partial C(\sum_{i \in N}((y_{i}^{t+1})^{k}+(z_{i}^{t+1})^{k}-b^{t+1}),t)}{\partial (\sum_{i \in N}  (z_{i}^{t+1})^{k})}
\end{equation} }
	\item{After receiving the price each user $i$ update its demand using gradient based method \cite{bertsekas1989parallel}.
	
	\begin{equation}\label{eq20}
	(y_{i}^{t+1})^{k+1}=[(y_{i}^{t+1})^{k}+\gamma(\frac{\partial U((y_{i}^{t+1})^{k}+(z_{i}^{t+1})^{k}-b^{t+1}}{\partial (y_{i}^{t+1})^{k}}) - (P_{l}^{t+1})^{k})] ^{s_{i}}
\end{equation}
	
	\begin{equation}\label{eq21}
	(z_{i}^{t+1})^{k+1}=[(z_{i}^{t+1})^{k}+\gamma(\frac{\partial U((y_{i}^{t+1})^{k}+(z_{i}^{t+1})^{k}-b^{t+1}}{\partial (z_{i}^{t+1})^{k}}) - (P_{u}^{t+1})^{k})] ^{s_{i}}
\end{equation}
	
	}
\end{itemize}

\begin{figure*}[h]
	\centering
	\begin{subfigure}{.33\textwidth}
		\includegraphics[width=5cm,height=3cm]{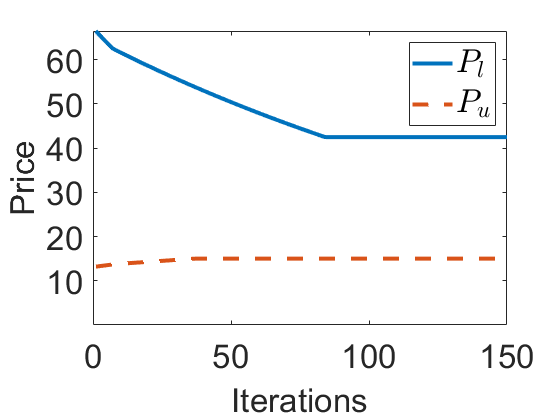}
		\caption{}
	\end{subfigure}
	\begin{subfigure}{.33\textwidth}
		\includegraphics[width=5cm,height=3cm]{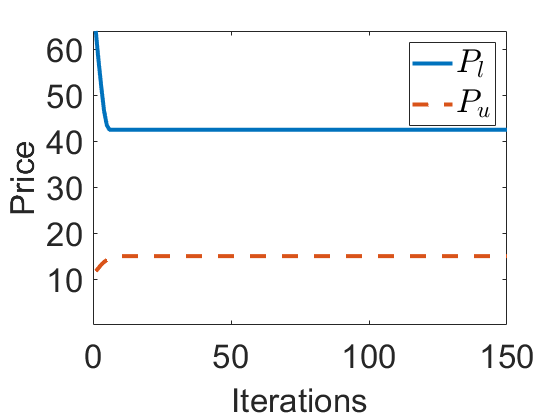}
		\caption{}
	\end{subfigure}
	\begin{subfigure}{.33\textwidth}
		\includegraphics[width=5cm,height=3cm]{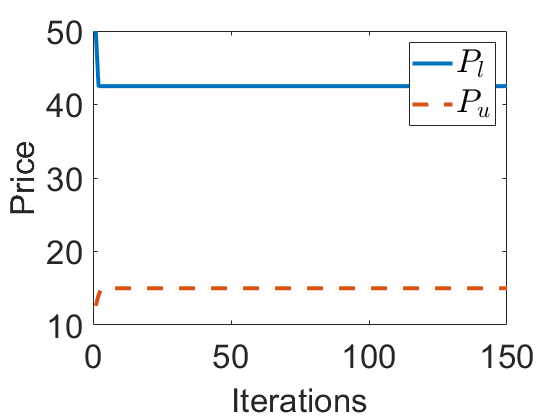}
		\caption{}
	\end{subfigure}
	
	\caption{Effect of $\gamma$ on the convergence of algorithm for $\beta_2 >\beta_1$ (a) $\gamma=0.01$ (b) $\gamma=0.1$ (c) $\gamma=0.3$}
	\label{fig:alpha_0_1}
\end{figure*}

%

Where $\gamma > 0$ is a constant step size used in gradient methods \cite{bertsekas1989parallel}. The set $s_i$ represents the projection on the constraints given by \ref{eq1},\ref{eq2}, \ref{eq7}, \ref{eq8} and \ref{eq9}. This operation can be easily performed by the users since all these constraints are of same user. Since the problem is convex, it converges for small $\gamma$. The whole algorithm repeats for each time step $t$. Both the power company and users jointly run the market under BRP.  

\subsection{Numerical Example}

We consider a simple numerical example of two customers($N=2$), with $b^t=25$. Thw utility function used is given in eq \ref{eq22}, where $w_i \in [10-100]$ and $\alpha =1$. We consider a step wise quadratic cost function, Where the cost changes after providing first $b^t*N$ units of electricity. If the cost increases it discourages customers to use less electricity and $P_{u}^{t}>P_{l}^{t}$. Similarly, if the cost of production decreases it encourages customers to use more electricity and $P_{l}^{t}>P_{u}^{t}$. This cost function is more generic since there is a base power fulfilled by one set of generation units and as the demand increases the utility runs additional generation units with different cost. The cost function is given in eq \ref{eq23}.

\begin{equation} \label{eq22}
	U(x^{t}_{i}) =
	\begin{cases}
		w_i x^{t}_{i}-\alpha  (x^{t}_{i})^2      & \quad \text{if } 0\leq x_{i}^{t} \leq \frac{w_i}{\alpha} \\
		\frac{w_{i}^{2}}{2\alpha}   & \quad \text{if } \frac{w_i}{\alpha}\le x_{i}^{t}
	\end{cases}
\end{equation}

\begin{equation} \label{eq23}
	C(D) =
	\begin{cases}
		\beta_{1}D^2      & \quad \text{if } 0\leq D \leq b^tN \\
		\beta_{2}D^2   & \quad \text{if }  D > b^tN
	\end{cases}
\end{equation}

Figure \ref{fig:alpha_0_1} shows the effect of different $\gamma$ on the convergence of algorithm, with  $\beta_2 >\beta_1$. As $\gamma$ increases the algorithm started to converge in less iterations. Since  $\beta_2 >\beta_1$ the price of second block is greater than the price of first block ($P_u>P_l$). 

\section{Conclusion} \label{conclusions}
In this paper we formulate a DR program under block rate pricing with two blocks. We consider a competitive market where users try to maximize their net utility and the power company try to maximize the net revenue. It is proved that under block rate pricing and quadratic cost function, the system achieves an equilibrium, which is unique and efficient. We propose a distributed algorithm to compute the equilibrium. The numerical results demonstrate that the distributed algorithm can efficiently compute the pricing of both blocks.

\bibliographystyle{ACM-Reference-Format}
\bibliography{reference}

\end{document}